\def\thickhline{%
  \noalign{\ifnum0=`}\fi\hrule \@height \thickarrayrulewidth \futurelet
   \reserved@a\@xthickhline}
\def\@xthickhline{\ifx\reserved@a\thickhline
               \vskip\doublerulesep
               \vskip-\thickarrayrulewidth
             \fi
      \ifnum0=`{\fi}}
\newlength{\thickarrayrulewidth}
\newtheorem{proposition}{Proposition}
\begin{document}

\title{Minimization of the Worst-Case Average Energy Consumption in UAV-Assisted IoT Networks}
\author{
	\IEEEauthorblockN{Osmel Mart\'{i}nez Rosabal, Onel Alcaraz L\'{o}pez, Dian Echevarr\'{i}a P\'{e}rez, Mohammad Shehab, Henrique Hilleshein, and Hirley Alves}
	\thanks{All authors are with the Centre for Wireless Communications (CWC), University of Oulu, Finland. Email: firstname.lastname@oulu.fi} 
	\thanks{This work is partially supported by Academy of Finland 6Genesis Flagship (Grant no. 318927).}
}   
\maketitle

\begin{abstract}
The Internet of Things (IoT) brings connectivity to a massive number of devices that demand energy-efficient solutions to deal with limited battery capacities, uplink-dominant traffic, and channel impairments. In this work, we explore the use of Unmanned Aerial Vehicles (UAVs) equipped with configurable antennas as a flexible solution for serving low-power IoT networks. We formulate an optimization problem to set the position and antenna beamwidth of the UAV, and the transmit power of the IoT devices subject to average-Signal-to-average-Interference-plus-Noise Ratio ($\bar{\text{S}}\overline{\text{IN}}\text{R}$) Quality of Service (QoS) constraints. We minimize the worst-case average energy consumption of the latter, thus, targeting the fairest allocation of the energy resources. The problem is non-convex and highly non-linear; therefore, we re-formulate it as a series of three geometric programs that can be solved iteratively. Results reveal the benefits of planning the network compared to a random deployment in terms of reducing the worst-case average energy consumption. Furthermore, we show that the target $\bar{\text{S}}\overline{\text{IN}}\text{R}$ is limited by the number of IoT devices, and highlight the dominant impact of the UAV hovering height when serving wider areas. Our proposed algorithm outperforms other optimization benchmarks in terms of minimizing the average energy consumption at the most energy-demanding IoT device, and convergence time.
\end{abstract}
\begin{IEEEkeywords}
IoT, UAV, energy efficiency, worst-case average energy consumption, reconfigurable antennas, geometric programming.
\end{IEEEkeywords}
\section{Introduction}

The fifth generation of cellular networks (5G) is introducing for the first time, in addition to the traditional human-centric broadband communication services, new service classes related to the Internet of Things (IoT)  \cite{Tullberg2016}. 
IoT use cases are usually characterized by the deployment of numerous low-cost low-power devices,
for which novel energy-efficient strategies are increasingly needed as the network densifies \cite{green_IoT, mahmood2019six, mahmood2020white}.
Furthermore, the information and communication technology industry currently contributes to $6\%$ of global CO$_2$ emissions \cite{green}. As a consequence, energy-efficient technologies and solutions are relentlessly pursued by industry and academy. We need to consider myriad of different approaches to obtain environment friendly IoT deployments including, but not limited to, reducing energy consumption, greener materials in the production of IoT devices, proper waste disposal, and sharing infrastructure \cite{alsamhi2019greening,7997698}. Focusing on the energy consumption component, IoT devices should just transmit necessary data while using efficient wake up protocols, sleep scheduling, collision/congestion
avoidance schemes and other possible energy saving improvements \cite{alsamhi2019greening}.

The IoT traffic is uplink-dominant \cite{shafiq2013},
thus, significant energy saving is attained from reducing the IoT devices' transmit energy\footnote{Note that herein we refer to transmit energy to highlight that both, the transmit power and transmission duration, influence the energy consumption, and consequently the lifetime of the IoT device battery.} \cite{Jensen2012}. Besides, with a network-wide reduced transmit energy, interference is reduced, which may help to sustain the desired Quality of Service (QoS) with fewer resources.
Therefore, it is important to reduce the uplink transmit energy to reach more energy-efficient IoT solutions. However, depending on the distance and position of the IoT device with respect to its associated terrestrial Base Station (BS), this may be extremely difficult to achieve due to shadowing and blockage effects. Uplink channel may require to be compensated with greater transmit power \cite{mozaffari2019}. In that sense, it is desirable to have a BS that can dynamically change its coverage  based on the position and traffic pattern of the IoT devices. This goal can be achieved by using aerial BSs, such as Unmanned Aerial vehicles (UAVs) \cite{JSAC_UAV}.

To this end, some authors recognize UAVs as a promising technology due to their potential to provide provisional communication infrastructure in disaster scenarios, as opportunistic relays to serve blocked links, or as flying BSs to boost coverage in certain areas \cite{Chen_paper}. Considering the foreseen massive number of IoT devices, UAVs are an attractive solution for energy efficiency and QoS improvements due to the enhanced coverage resulting from their high mobility and ability to hover as discussed in \cite{8795473}. In fact, it is challenging to obtain Line of Sight (LOS) using terrestrial BSs in urban canyon environments \cite{6732923, 8950099}, and it is hard to envision smart cities without the assistance of UAVs \cite{8795473, 6842265}. Furthermore, by equipping the UAVs with reconfigurable antennas \cite{hussaini2015}, more degrees of freedom could be attained since it is possible to adjust the beam footprint of the UAV by means of electrical, optical, mechanical, and material change techniques to boost even more the coverage with QoS guarantees \footnote{We acknowledge that some design challenges impact the performance of reconfigurable antennas such as the proper design of a biasing network, difficult integration, high power consumption, and mechanical stress of moving parts. However, recent works have shown how to ease the design process of reconfigurable antennas. Please, see \cite{7086418} and references therein for more information}. These features are precisely exploited in our work to reduce the worst-case average energy consumption of the IoT devices.

\subsection{Related Literature}

Many recent works have addressed problems such as the optimal UAVs' positioning and trajectory, and device association for reliable and energy-efficient communications. For instance, the work in \cite{Lhazmir2019} maximized the energy efficiency of the IoT network by associating UAVs and IoT devices using regret-matching learning. The authors of \cite{al2014optimal} derived the optimal hovering height of low-altitude UAVs to achieve the maximum coverage based on the path loss, elevation angle and statistical parameters of the urban environment. In \cite{Mozaffari_letter}, an approach for maximizing the coverage area was derived based on the optimal deployment of multiple UAVs at a fixed altitude with directional antennas. An energy-efficient 3D placement study was carried out in \cite{alzenad20173} to maximize the number of covered users with a minimum transmit power. Meanwhile, UAVs are shown in \cite{lin2019} to save valuable energy resources to ground devices in hostile or inaccessible places. Authors discussed the trade-off between system efficiency and energy efficiency considering the UAV altitude and speed, and also the frame length at the MAC layer. In \cite{alsamhi2018predictive}, authors used Artificial Neural Networks (ANNs) to predict the coverage and the received signal strength of IoT devices based on the altitude and distance from the UAV. They demonstrated that ANNs can predict better the path loss of the link between the IoT devices and the UAV than the empirical Hata model. Moreover, a better UAV altitude prediction leads to satisfy the QoS requirements more easily. Also, authors estimated the probability of LOS for suburban, urban and dense urban environments in function of the elevation angle, and showed that given an elevation angle closer to $90^o$, the probability of LOS tends to be $1$ for any of the studied urban environments. Authors in \cite{8038014} compared the performance of exhaustive search (ES) and maximal weighted area
(MWA) in finding the optimal altitude for UAVs to provide coverage to devices with different QoS requirements. Their numerical simulations showed that MWA algorithm has a close performance to ES algorithm while it has lower complexity. Authors in \cite{sikeridis2018wireless} proposed a reinforcement learning (RL) based approach to jointly optimize the IoT nodes' transmission power and the UAV positioning. Authors in \cite{9426899} studied the joint age of information and IoT devices' energy consumption minimization in a UAV-assisted data collection problem. Therein, the authors trained an ANN to find the optimal trajectory of the UAV as well as the resource (bandwidth and transmit power) allocation for data collection. An energy-efficient UAV-enabled solution for massive IoT shared spectrum access was proposed in \cite{Hattab2020}, where the IoT devices' transmit power was optimized while the interference constraint for the closest primary user is respected. 

Moreover, UAVs may be used as data aggregators. For instance, the authors of \cite{wang2020} proposed an energy-efficient solution for IoT data collection at cell edges. Therein, the trajectory of the UAV and the IoT devices transmission schedule is optimized. In \cite{li2019}, UAVs were used as communication relays to assist the links  between smart devices and a low-orbiting satellite in scenarios without coverage from terrestrial BSs. The authors optimized the subchannel selection, UAV relays deployment and uplink transmission power control of smart devices to maximize the energy efficiency of the system. The authors of \cite{azizi2017joint} minimized the overall transmit power of IoT devices considering the radio resource allocation, 3D placement and user association to the UAV BS. They used semi-definite relaxation and Geometric Programming (GP) to solve the corresponding optimization problem. However, the works in \cite{Lhazmir2019,al2014optimal,Mozaffari_letter,alzenad20173,lin2019,alsamhi2018predictive, 8038014, sikeridis2018wireless, Hattab2020, wang2020, li2019, azizi2017joint} do not consider the  activation pattern of the IoT devices, which considerably influences the optimum system setup. Conversely,
the authors of \cite{U3} jointly determined the optimal UAV's location, device association and uplink power control considering the activation patterns of IoT devices and a channel assignment strategy. They transformed this non-convex optimization problem into convex by decomposing it in two sub-problems. Firstly, they considered that the UAV BSs are in a fixed position to find the jointly optimal device associations and devices' transmit power. Secondly,the positioning of the UAVs was optimized given the device associations. However, they minimized the total transmission power, which could cause unfairness among IoT nodes because UAV BSs would focus on big clusters of IoT nodes to reduce the overall transmission power. 

Recently, in \cite{9354996} the authors considered the use of UAVs as providers of computational resources for terrestrial devices that can either upload their tasks to the UAVs or compute them locally. Therein, they jointly minimize the total energy consumption spent for uplink transmissions at the served devices and the trajectory of the UAVs using a deep RL approach. Finally, in \cite{8103781} the authors studied a multiuser network served by a UAV which is equipped with a reconfigurable antenna. Therein, the authors jointly optimize the UAV's hovering height and antenna beamwidth to maximize the throughput under different multiuser communication models. They divided the network into non-overlapping clusters and then proposed a fly-hover-and-communicate protocol for the UAV to sequentially serve each cluster.
\subsection{Contributions}
Different from the above works, herein, we propose a fair energy-efficient UAV-assisted IoT network, where we reduce the transmit energy consumption for the worst-case IoT nodes. In this case, the UAV BS is re-positioned considering the worst-case average energy consumption of the IoT nodes' uplink transmission and their activation patterns. In order to achieve this, we consider a UAV BS equipped with a reconfigurable antenna and serving multiple IoT devices. This is motivated by the increasing interest on incorporating reconfigurable antennas at UAVs \cite{Wolfe.2018} and/or creating antenna arrays via UAV swarms \cite{Zeng.2018}. By using a reconfigurable antenna, the UAV can properly vary the beamwidth to optimize the system performance. In this case, the optimization is in terms of the energy efficiency measured as the average energy consumption at the most energy-demanding IoT device when transmitting in the uplink. The optimal UAV 3D position and devices' transmit power are found based on their activation pattern and subject to average-Signal-to-average-Interference-plus-Noise Ratio ($\bar{\text{S}}\overline{\text{IN}}\text{R}$) QoS constraints. 

Our main contributions are three-fold:
\begin{itemize}
    \item Instead of the traditional total power minimization problem as in \cite{U3}, we aim at reducing the average energy consumption at the most energy-demanding device in the network.
    This conduces to fair allocation of the power resources, which allows synchronizing the devices' lifetime so that maintenance (e.g., for battery replacement) can be efficiently planned;
    
    \item The resulting optimization problem, which is not convex and highly non-linear, is approximately re-cast as a series of three GPs that can be efficiently solved. 
    
    \item The proposed algorithm reaches near-global optimal solutions for the worst-case average energy consumption, and considerably outperforms other benchmark schemes based on Interior-Point Methods (IPMs) and Genetic Algorithms (GA) in terms of minimizing the worst-case average energy consumption of the IoT devices and the computation time;
    
    \item Results show that the number of IoT devices limits the achievable QoS due to interference, and that the worst-case average energy consumption does not depend significantly on the density of the obstacles but on their height. The optimal hovering height of the UAV increases linearly with respect to the coverage area, and hence the worst-case average energy consumption of the uplink transmissions grows with the distance between the UAV and the IoT devices.
\end{itemize}

\subsection{Outline}
The remainder of this paper is structured as follows. Section \ref{system} describes the system model and presents the problem formulation. Section \ref{reformulation} discusses some insights on the problem feasibility conditions and reformulates the problem as a series of GPs. Section \ref{optimum} presents the proposed optimization algorithm, while simulation and  numerical results are analyzed in Section \ref{results}. Finally, we draw conclusions and make final remarks in Section \ref{conclusions}. To make the paper more tractable, we summarize the key abbreviations and symbols that will appear throughout the paper in Table \ref{t1}.

\begin{table}[t]
	\centering
	\caption{Important abbreviations and symbols.}
	\label{t1}
	\renewcommand{\arraystretch}{1.3}
	\begin{tabular}{l l} 
		\hline
		GA &  Genetic Algorithms \\
		GP & Geometric Programming \\	
		IoT & Internet of Things \\
		IPMs & Interior-point methods \\
		LAP & Low-altitude platform \\
	    LOS & Line of Sight \\
	    QoS & Quality of Service \\
	    $\bar{\text{S}}\overline{\text{IN}}\text{R}$ & average-Signal-to-average-Interference-plus-Noise Ratio \\
	    UAV & Unmanned Aerial Vehicle \\
	    \\
	    inf & Infimum \\
	    sup & Supremum \\
	    \\
	    $c_k$ & Activation probability of device $k$ \\
	    $G_k$ & Antenna gain seen by device $k$ \\
	    $L_k$ & Path loss between the UAV and device $k$ \\
	    $\mathcal{O}\left(\cdot\right)$ & Order of the function \\
	    $p_k$ & Transmit power of device $k$ \\
	    $\beta,\psi$ & Propagation parameters \\
	    $\theta_B$ & UAV's antenna half beamwidth \\
	    $\eta$ & Path loss coefficient \\
	    $\zeta$ & Convergence parameter \\
	    $K$ & Number of IoT devices \\
	    $\gamma_k$ & $\bar{\text{S}}\overline{\text{IN}}\text{R}$ of device $k$ \\
	 \hline
	\end{tabular} 
\end{table}

\section{System Layout and Problem Formulation}\label{system}
\subsection{System Layout}
We consider a wireless system consisting of a set $\mathcal{K}=\{1,2,\cdots,K\}$ of $K$ low-power single-antenna IoT nodes, whose deployment in the 2D plane is given by $\{(x_k,y_k)|\ k\in \mathcal{K}\}$. An IoT node is considered to be active when it has data to transmit \cite{shehab2020traffic}.
Note that not all devices are active at the same time, thus, hereafter, $c_k\in(0,1)$ denotes the probability of device $k\in\mathcal{K}$ being active. This information could be acquired beforehand by applying a traffic prediction algorithm that depends on correlated devices activity or prior knowledge of devices activation patterns as in \cite{Anders_paper,Anders_letter,shehab2020traffic}. The activation pattern is herein exploited to efficiently allocate resources to IoT nodes \cite{shehab2020traffic}. The same applies for downlink communications in temporally crowded places due to major events (e.g. sport matches and concerts) where UAVs could be sent to offload the existing permanent wireless network \cite{9220821}. In both uplink and downlink cases for UAV-assisted wireless network, we can use the data traffic prediction to efficiently allocate resources and to re-position the UAVs. 
\begin{figure}[t!]
	\centering
	\includegraphics[width=\columnwidth]{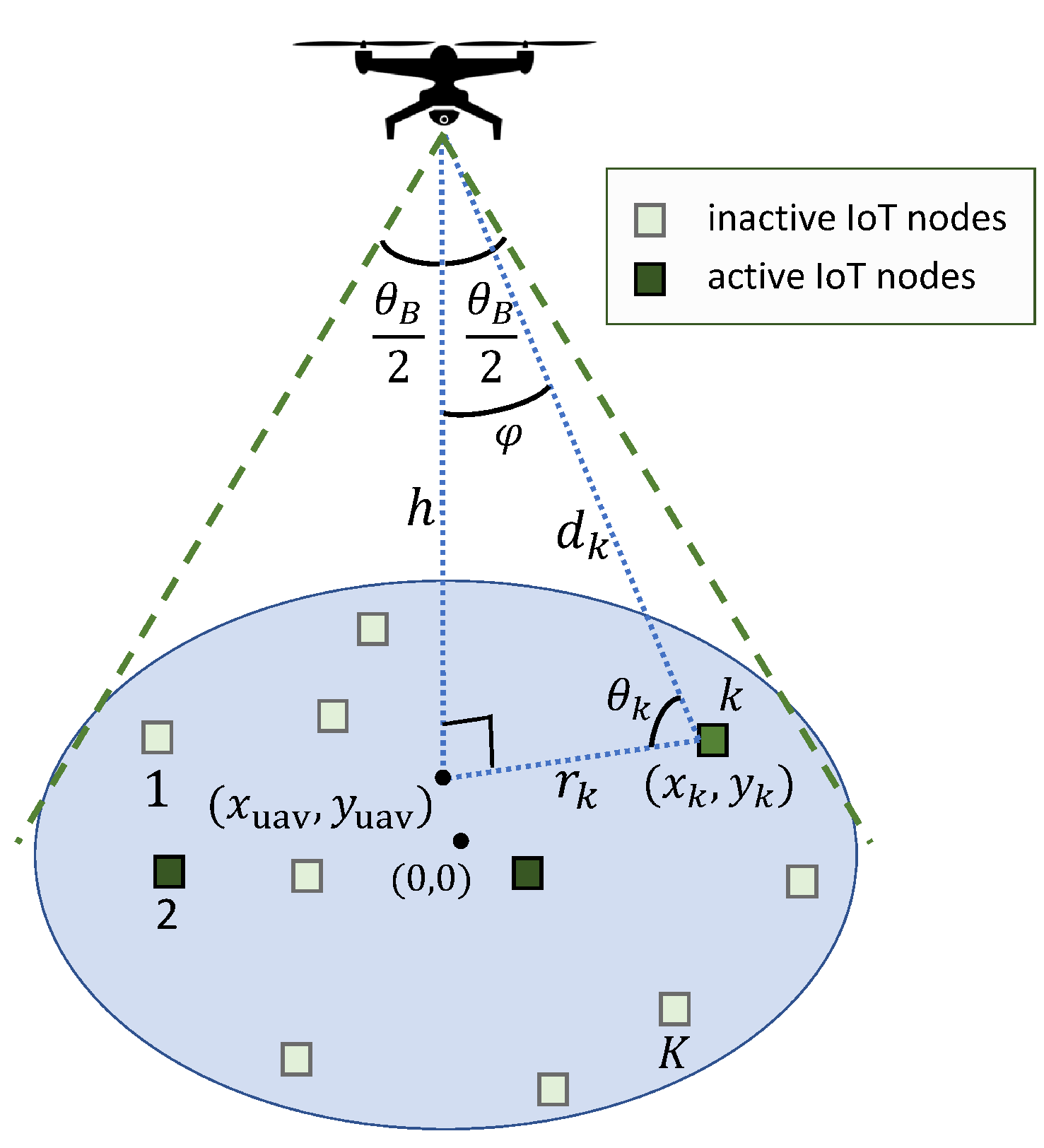}
	\caption{The system model comprises a set $\mathcal{K}$ of IoT nodes served by a rotary-wing UAV.}
	\label{Fig1}
\end{figure}

We analyze an uplink scenario as illustrated in Fig.~\ref{Fig1}, where active devices communicate over the same frequency band\footnote{The extension to a multi-channel scenario with random or deterministic channel allocation is straightforward.}  with a rotary-wing UAV at height\footnote{The term ``height" refers to the vertical distance from the surface where the IoT nodes are deployed to the UAV. However, the regional operational rules for flying small UAVs define, in most cases, the hovering height limits in terms of the absolute altitude, which is the height of the UAV above sea level. Please, see \cite{noauthor_remote_2021, noauthor_drone_nodate} for more information.} $h$. We assume the use of a low-altitude platform (LAP) such as a quadrotor UAV. The position of the UAV is then fully given by $(x_\mathrm{uav},y_\mathrm{uav},h)$. Additionally, we denote the UAV's directional antenna half beamwidth by $\theta_B$, thus, the antenna gain seen by the $k-$th IoT node transmissions can be approximated by
\begin{align}
    G_k=\Bigg\{\begin{array}{ll}
         G_{\mathrm{3dB}},& -\frac{\theta_B}{2}\le \varphi_k\le \frac{\theta_B}{2},  \\
         0,&\mathrm{otherwise}, 
    \end{array}
\end{align}
where $\varphi_k$ is the corresponding sector angle. $G_{\mathrm{3dB}}$ is the main lobe gain, which we consider as a null gain outside the main lobe, and is given approximately by $ G_{\mathrm{3dB}}\approx \frac{8.83}{\theta_B^2}$ with $\theta_B$ in radians \cite{Mozaffari_letter,Balanis.2016}. Additionally, the UAV is equipped with a reconfigurable antenna such that it is capable of tuning $\theta_B\ge \theta_0$ as it sees fit, where $\theta_0$ is the minimum antenna half beamwidth. The reader can refer to \cite{khairnar2020survey} and the references therein for more information about different techniques for implementing reconfigurable antennas.

\subsection{Channel model}
The ground-to-air channel depends greatly on the type of environment (e.g., rural, suburban, urban, highrise urban, etc). Notice that in such practical scenarios one may not have any additional information about the exact locations, heights, and number of the obstacles. Therefore, it is advisable to consider the randomness associated with the LOS and non-LOS (NLOS) links when designing the
UAV-based communication system. 

For ground-to-air communications, each device will typically have a LOS view towards the UAV with a given probability. This LOS probability depends not only on the environment but also on the elevation angle, and for the $k-$th IoT device it is commonly modeled as \cite{U3}
\begin{align}\label{plos}
    P^{\mathrm{los}}_k=\frac{1}{1+\psi e^{-\beta(\theta_k-\psi)}},
\end{align}
where $\psi$ and $\beta$ are constant values, which depend on the carrier frequency and type of environment, while 
\begin{align}
    \theta_k=\tan^{-1}\frac{h}{r_k}=\sin^{-1}\frac{h}{d_k}
\end{align}
is the elevation angle. Notice that
\begin{align}
d_k&=\sqrt{r_k^2+h^2},\\
    r_k&=\sqrt{(x_k-x_\mathrm{uav})^2+(y_k-y_\mathrm{uav})^2},\label{rk}
\end{align}
denote the distance from $k\in \mathcal{K}$ to the UAV and to its projection on ground, respectively. Then, the NLOS probability is given by $P^{\mathrm{nlos}}_k=1-P^{\mathrm{los}}_k$.
As expected, the LOS probability in \eqref{plos} models practical phenomena since by increasing the elevation angle and/or the UAV altitude, the chances of LOS are greater. 

The path loss model for LOS and NLOS links between device $k$ and the UAV is given by \cite{ICC2016}
\begin{align}\label{Lk}
    L_k=\eta\Big(
    \frac{4\pi f_c d_k}{c}\Big) ^{\alpha},
\end{align}
where $f_c$ is the carrier frequency, $\alpha$ is the path loss exponent, $c$ is the speed of light, while $\eta \in \{\eta_1, \eta_2\}$, where $\eta_1$ and $\eta_2$ ($\eta_2>\eta_1>1$) are the excessive path loss coefficients under LOS and NLOS conditions, respectively. Now, leveraging \eqref{plos} and \eqref{Lk}, the average path loss between device $k$ and the UAV can be expressed as
\begin{align}
    \Bar{L}_k=\Big(P^{\mathrm{los}}_k\eta_1+P^{\mathrm{nlos}}_k\eta_2\Big)(\kappa d_k)^{\alpha}\label{Lkm},
\end{align}
where $\kappa=4\pi f_c/c$. Then, the average  channel power gain is given by $\Bar{g}_k=1/\bar{L}_k$. Finally, the per-link communication performance is measured through its $\bar{\text{S}}\overline{\text{IN}}\text{R}$\footnote{The term $\bar{\text{S}}\overline{\text{IN}}\text{R}$ here is used to highlight the fact that instead of using $L_k$ separately for LOS and NLOS links we utilize $\bar{L}_k$, while accounting also for the average activation probabilities. By doing this, the SINR expression becomes more tractable than the \textit{average} SINR which is more naturally linked to decoding success. Hence, $\gamma_0$ is chosen such that when $\bar{\text{S}}\overline{\text{IN}}\text{R} \geq \gamma_0$, the chances of outage are negligible. Such approach has been also adopted in several works, e.g.,  \cite{ICC2016,al2014optimal,U3,alzenad20173}.}, which for the $k-$th IoT device it is given by
\begin{align}\label{gamma1}
    \gamma_k=\frac{G_kp_k\bar{g}_k}{\sum_{j\in\mathcal{K}\backslash k}c_jG_jp_j\bar{g}_j+\sigma^2},
\end{align}
where $p_k$ denotes its transmit power and $\sigma^2$ is the additive white gaussian noise power at the UAV receiver.
\subsection{Problem Formulation}\label{INF}

Herein, we are interested in minimizing the worst-case average energy consumption per device by optimizing not only their transmit power but also the UAV position $x_\mathrm{uav},y_\mathrm{uav},h$ and antenna beamwidth $\theta_B$. To this end, we cast the min-max optimization problem as follows:
\begin{subequations}\label{P1}
	\begin{alignat}{2}
	\textbf{P1:}\qquad &\underset{x_\mathrm{uav},y_\mathrm{uav},h,\{p_k\},\theta_B}{\mathrm{minimize}}       &\ \ \ & 
	\max_{k\in\mathcal{K}}\ \{c_kp_k\} \label{P1:a}\\
	&\quad\ \ \ \ \mathrm{subject~to} &      & \gamma_i\ge \gamma_0,\ \forall i\in \mathcal{K}\label{P1:b}\\
	& & & p_{\min}\!\le\! p_i \!\le\! p_{\max}, \forall i \!\in \!\mathcal{K},\! \label{P1:c} \\
	&                  &      & h\ge h_{\min},\label{P1:d}\\
	&                  &      & \theta_B\ge \theta_0.\label{P1:e}
	\end{alignat}
\end{subequations}
Herein, we consider that the transmission time is normalized, which allows us to use $\max_{k \in \mathcal{K}} \{c_k p_k\}$ as the average energy consumption of the most energy-demanding device of the network. By minimizing this quantity, we limit the average energy consumption at each device which ultimately prevent long-lasting peaks in their energy consumption profile that quickly reduce their batteries' lifetime \cite{9452142}. Besides, we assume that the IoT devices contend for the same uplink resource block using a grant-free random access protocol, whose recovery performance at the UAV depends on $\gamma_0$ in \eqref{P1:b} \cite{9097306}, i.e., an $\bar{\text{S}}\overline{\text{IN}}\text{R}$ target that must be achieved by all devices\footnote{Herein, we consider a common $\bar{\text{S}}\overline{\text{IN}}\text{R}$ target constraint for all devices---possibly corresponding to the same application---for solving the minimization of the worst-case average energy consumption of the massive IoT deployment. However, we acknowledge that IoT networks have heterogeneous QoS requirements, and thus we left this problem for future works.}. Power restrictions in \eqref{P1:c} are due to hardware limitations and/or spectrum regulations, while the UAV altitude restriction in \eqref{P1:d} is due to aviation regulations \cite{mozaffari2019}, and the antenna beamwidth constraint in \eqref{P1:e} is inherent to the antenna hardware. Finally, we assume that the system dynamics are quasi-static.

Since neither the objective function \eqref{P1:a} nor the inequality constraints \eqref{P1:b} are convex, \textbf{P1} is obviously not convex. This non-convexity and the extreme non-linearity of \eqref{P1:b} on all the optimization variables, render \textbf{P1} extremely hard to solve efficiently in its current form. Therefore, in the following section, we aim at solving such issue by approximately casting \textbf{P1} as a series of GP problems, which in turn can be efficiently solved. 

\section{Problem reformulation}\label{reformulation}
\subsection{Insights on problem feasibility}
We start by studying the feasibility of the problem. We observe that $\gamma_k$ in \eqref{gamma1} is a step function of $G_{\mathrm{3dB}}$ since we assume null gain outside the main antenna lobe. Then, for guaranteeing \eqref{P1:b} all the IoT devices must be in the ground footprint of the UAV's antenna main lobe. Such practical constraint can be geometrically given as 
	\begin{align}\label{B}
	    \theta_B\ge 2\tan^{-1}\frac{\max_k r_k }{h},
	\end{align} 
	and then, we are free to modify \eqref{gamma1} as follows	%
	\begin{align}\label{SK}
	 \gamma_k=\frac{G_{\mathrm{3dB}}p_k\bar{g}_k}{G_{\mathrm{3dB}}\sum_{j\in\mathcal{K}\backslash k}c_jp_j\bar{g}_j+\sigma^2}.   
	\end{align}
	The next Proposition establishes the maximum target $\bar{\text{S}}\overline{\text{IN}}\text{R}$ for any IoT node in our setup.
	\begin{proposition}
	For feasibility, the target $\bar{\text{S}}\overline{\text{IN}}\text{R}$ is required to satisfy the following constraint in the considered scenario
	\begin{align}
	    \gamma_0&\le \frac{1}{(K-1)\max_k c_k+\frac{\sigma^2\theta_0^2}{8.83p_{\max}}},\label{gamma2_init}\\
	    &< \frac{1}{(K-1)\max_k{c_k}}.\label{gamma2}
	\end{align}
	\end{proposition}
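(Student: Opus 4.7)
The plan is to derive the necessary feasibility condition directly from the SINR constraint \eqref{P1:b} by isolating $\gamma_0$ and substituting worst-case bounds on the remaining quantities. Starting from the simplified expression \eqref{SK}, dividing through by the useful signal power $G_{\mathrm{3dB}}\,p_k\bar{g}_k$ turns the per-device requirement $\gamma_k\ge\gamma_0$ into
\begin{align*}
\frac{1}{\gamma_0}\ \ge\ \frac{1}{\gamma_k}\ =\ \sum_{j\in\mathcal{K}\setminus\{k\}} c_j\,\frac{p_j\,\bar{g}_j}{p_k\,\bar{g}_k}\ +\ \frac{\sigma^2}{G_{\mathrm{3dB}}\,p_k\,\bar{g}_k},
\end{align*}
which nicely separates the interference and noise contributions on the right-hand side. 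The goal is then to bound each of these below by its worst admissible value, so that the reverse inequality delivers an upper bound on $\gamma_0$.

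For the noise contribution I would use the two physical bounds available in the model. The antenna constraint \eqref{P1:e}, $\theta_B\ge\theta_0$, combined with $G_{\mathrm{3dB}}\approx 8.83/\theta_B^2$, gives $G_{\mathrm{3dB}}\le 8.83/\theta_0^2$; the power bound $p_k\le p_{\max}$ together with $\bar{g}_k\le 1$ (which holds because $\bar{L}_k\ge 1$ in the regime of carrier frequencies and distances implied by \eqref{Lkm}) yields $p_k\bar{g}_k\le p_{\max}$. Hence $\sigma^2/(G_{\mathrm{3dB}}\,p_k\,\bar{g}_k)\ge \sigma^2\theta_0^2/(8.83\,p_{\max})$. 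For the interference contribution I would specialize the inequality to the device $k$ with the smallest useful signal $p_k\bar{g}_k$, so that every ratio $p_j\bar{g}_j/(p_k\bar{g}_k)\ge 1$, and then upper-bound each activation probability $c_j$ by $\max_k c_k$; in the tightest admissible scenario, where the $K-1$ interferers all carry the maximum activation probability and are received at the same power as the useful signal, this gives $\sum_{j\ne k}c_j\,p_j\bar{g}_j/(p_k\bar{g}_k)\ge (K-1)\max_k c_k$. Plugging both bounds into the displayed inequality and inverting yields the first inequality of the proposition, and the strict inequality $\gamma_0<1/((K-1)\max_k c_k)$ is then immediate because the noise contribution $\sigma^2\theta_0^2/(8.83\,p_{\max})$ is strictly positive.

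The main obstacle I anticipate is the rigorous justification of the aggregate interference bound, because in an arbitrary feasible configuration neither the ratios $p_j\bar{g}_j/(p_k\bar{g}_k)$ nor the weights $c_j$ attain their worst values simultaneously: replacing each $c_j$ by $\max_k c_k$ is an \emph{upper} bound on $\sum_{j\ne k}c_j$, whereas here I need a \emph{lower} bound on the whole interference ratio. The way around this is to read the $(K-1)\max_k c_k$ term as the worst admissible interference profile over all activation-probability vectors sharing the same $\max_k c_k$, namely the symmetric configuration in which every interferer has probability $\max_k c_k$ and is received with the same effective power as the minimum-$p_k\bar{g}_k$ device; once this worst-case reduction is carefully framed, the rest of the proof is straightforward algebra on \eqref{SK}.
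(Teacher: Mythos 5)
Your treatment of the noise term is sound and essentially matches the paper's (both use $p_k\le p_{\max}$, $G_{\mathrm{3dB}}\le 8.83/\theta_0^2$ and a unit-type bound on $\bar g_k$), and the strict inequality \eqref{gamma2} is indeed immediate once \eqref{gamma2_init} holds. The genuine gap is exactly at the step you flag yourself: the interference term $\sum_{j\in\mathcal{K}\setminus\{k\}}c_j\,p_j\bar g_j/(p_k\bar g_k)$ cannot be lower-bounded by $(K-1)\max_k c_k$ from the constraints \eqref{P1:b}--\eqref{P1:e} alone. Choosing $k$ as the device with the smallest $p_k\bar g_k$ only gives $\sum_{j\ne k}c_j\,p_j\bar g_j/(p_k\bar g_k)\ge\sum_{j\ne k}c_j\ge(K-1)\min_j c_j$, i.e.\ the weaker bound $\gamma_0\le 1/\bigl((K-1)\min_k c_k+\sigma^2\theta_0^2/(8.83\,p_{\max})\bigr)$; replacing $c_j$ by $\max_k c_k$ goes in the wrong direction, and your proposed ``symmetric worst-case profile'' repair does not close the hole, because the $c_k$ are fixed problem data: a condition necessary for the symmetrized instance (all $c_j$ raised to $\max_k c_k$) need not be necessary for the actual instance, since raising the $c_j$ only shrinks the feasible set. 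In fact the inequality cannot be derived from the constraints alone: take $K=2$, $c_1=0.5$, $c_2=0.01$, negligible noise and $G p_1\bar g_1=0.5\,G p_2\bar g_2$; then from \eqref{SK} one gets $\gamma_1\approx 50$ and $\gamma_2\approx 4$, so $\gamma_0=3$ is feasible although $1/\bigl((K-1)\max_k c_k\bigr)=2$.

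What the paper uses, and what is missing from your chain of bounds, is the structure of the objective. Its proof is built on the premise that in the ``best scenario'' the optimum of \textbf{P1} equalizes the average energies, $c_kp_k=\mu$ for all $k$. Under this fair allocation every interferer contributes $\mu\bar g_j$ while device $k$ transmits with power $\mu/c_k$, so the bottleneck (the device attaining the $\inf$) is the one with the largest $c_k$, because it transmits the least; evaluating its $\bar{\text{S}}\overline{\text{IN}}\text{R}$ in the most favorable admissible configuration---equal path losses, $p=p_{\max}$, $\theta_B\to\theta_0$---produces exactly the constant in \eqref{gamma2_init}. That is where $\max_k c_k$ enters in the paper's argument; it is not obtained by bounding the $c_j$ inside the interference sum, and without invoking the equal-$c_kp_k$ (fairness) assumption, or some equivalent use of the min--max objective, your derivation cannot reach the stated bound.
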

	\begin{proof}
	Let us proceed as follows. From \eqref{SK}, we have
	\begin{align}
	    \gamma_0&\le \inf_{k\in\mathcal{K}}\big\{\sup_{p_k,c_k,g_k,\theta_B}\{\gamma_k\}\big\}\nonumber\\
	    &\stackrel{(a)}{=}\sup_{p_k,c_k,g_k,\theta_B}\Big\{ \frac{G_{\mathrm{3dB}}p_{k^*}c_{k^*}\bar{g}_{k^*}/c_{k^*}}{G_{\mathrm{3dB}}c_{k^*}p_{k^*}\sum_{j\in\mathcal{K}\backslash k^*}\bar{g}_j+\sigma^2}\Big\}\nonumber\\
	    &\stackrel{(b)}{=} \sup_{\theta_B}\bigg\{\frac{1}{(K-1)\max_k c_k+\frac{\sigma^2}{G_{\mathrm{3dB}}p_{\max}}}\bigg\}\nonumber\\
	    &\stackrel{(c)}{=} \frac{1}{(K-1)\max_k c_k+\frac{\sigma^2\theta_0^2}{8.83p_{\max}}},
	\end{align}
	where the $\inf$ operation exists because all nodes are required to satisfy the QoS constraint $\gamma\ge \gamma_0$. Then, $(a)$ comes from assuming that $k^*\in \mathcal{K}$ is the IoT node with the weakest $\bar{g}_k$ and using the fact that in the best scenario the solution of \textbf{P1} leads to $c_kp_k=\mu,\ \forall k\in \mathcal{K}$. Next, $(b)$ comes from assuming all nodes with equal path loss and maximum transmit power, which maximizes the expression given in $(a)$. Then, we attain $(c)$ after letting $\theta_B\rightarrow \theta_0$, which matches \eqref{gamma2_init}; while \eqref{gamma2} is a relaxed result that allows for establishing a preliminary bound without the knowledge of $\sigma^2,\ \theta_0$ and $p_\mathrm{max}$.
	\end{proof}

Note that \eqref{gamma2} becomes tight as $K$, $\max_k c_k$, $p_\mathrm{max}$ increase and/or $\sigma^2$, $\theta_0$ decrease.
\subsection{Geometric program formulation}
At first, let us assume that \textbf{P1} can be partitioned into two GP sub-problems as follows:
\begin{itemize}
    \item \textbf{P1-1}, which is \textbf{P1} given $x_\mathrm{uav},y_\mathrm{uav}$. This requires initializing $x_\mathrm{uav},y_\mathrm{uav}$, which can be done  by simply choosing it such that $[x_j,y_j]^T\preceq [x_\mathrm{uav},y_\mathrm{uav}]^T\!\preceq [x_k,y_k]^T$, for certain $j,k\in\mathcal{K}$. Intuitively, the UAV's projection on ground  is expected to be close to a centroid determined by all IoT nodes position. Therefore, and given the devices activation probability a good initialization for $x_\mathrm{uav},y_\mathrm{uav}$ can be
    \begin{align}
        x_\mathrm{uav}^{(0)}=\frac{\sum_{k=1}^K c_kx_k}{\sum_{k=1}^Kc_k},\qquad \qquad \qquad y_\mathrm{uav}^{(0)}=\frac{\sum_{k=1}^K c_ky_k}{\sum_{k=1}^Kc_k}.\label{uav0}
    \end{align}
    \item \textbf{P1-2}, which is \textbf{P1} with optimization variables $x_k,y_k,p_k$, $\forall k\in\mathcal{K}$, where $h$ and $\theta_B$ are outputs of \textbf{P1-1}.
\end{itemize}
Notice that if such problem partition exists as will be shown later, then \textbf{P1} can be solved by mutually projecting the subproblems' solutions in an iterative way. This is because each GP subproblem can be readily transformed to a convex problem (see Section~\ref{optimum}), and then according to Newmann's alternating projection Lemma, the resulting iterative procedure always converges to the global optimum \cite{Wang.2016}.

\subsubsection{\textbf{P1-1}} We depart from \textbf{P1} \eqref{P1} by noticing that the unconstrained optimization part in \eqref{P1:a} can be alternatively stated in the GP standard form as 
\begin{subequations}\label{f0}
\begin{alignat}{2}
    \mathrm{minimize} \ \ \max_{k\in\mathcal{K}}\{c_kp_k\}\rightarrow & \mathrm{minimize} \ \ t, \\
    & \ \mathrm{subject~to}\ \ c_kp_kt^{-1}\!\le 1,\ \forall k\in\mathcal{K},
\end{alignat}
\end{subequations}
while the constraints \eqref{P1:c}$-$\eqref{P1:e} transform to
\begin{subequations}\label{con}
\begin{alignat}{2}    p_\mathrm{min}p_i^{-1}&\le 1,\qquad \forall i\in\mathcal{K}\label{con-a}\\
    p_i p_\mathrm{max}^{-1}&\le 1,\qquad \forall i\in\mathcal{K}\label{con-b}\\
    h_\mathrm{min}h^{-1}&\le 1,\label{con-c}\\
    \theta_0\theta_B^{-1}&\le 1.\label{con-d}
\end{alignat}
\end{subequations}
Afterwards, we are required to deal just with \eqref{P1:b}, which is very intricate. As commented in the previous subsection, such constraint can be divided into two constraints, which are given in \eqref{B} and \eqref{SK}. Although, the first one already involves a tangent function, which is not allowed in a GP environment, we can take advantage of the limited range of $\theta_B$ values, e.g., $\theta_B\in\big(\theta_0,2\tan^{-1}\frac{\max_{i\in\mathcal{K}} r_i}{h_\mathrm{min}}\big]$, to use standard curve fitting tools and write $\tan(\theta_B/2)\approx q_1\theta_B^{q_2}$. Notice that $\theta_B<\pi$ since $h_\mathrm{min}>0$, which favors the adopted power approximation. Also,   $q_1$ and $q_2$ are positive since for a feasible $\theta_B$ the function is increasing and positive. The approximation can be  tight for  $\max_{i\in\mathcal{K}} r_i\le 2h_{\mathrm{min}}$, which should hold in practical setups. Therefore, \eqref{B} is relaxed to
\begin{align}
    q_1^{-1}h^{-1}\theta_B^{-q_2}\max_{i\in\mathcal{K}} r_i\le 1.\label{B2}
\end{align}
Regarding the constraints related to $\gamma_i\ge\gamma_0$, where $\gamma_i$ is given in \eqref{SK}, we proceed as follows
\begin{align}
	 \frac{G_{\mathrm{3dB}}p_k\bar{g}_k}{G_{\mathrm{3dB}}\sum_{j\in\mathcal{K}\backslash k}c_jp_j\bar{g}_j+\sigma^2}\ge \gamma_0& \nonumber\\
	 \gamma_0\sum_{j\in\mathcal{K}\backslash k}c_jp_j\bar{g}_jp_k^{-1}\bar{g}_k^{-1}+\gamma_0\sigma^2G_{\mathrm{3dB}}^{-1}p_k^{-1}\bar{g}_k^{-1}\le 1\ \ &\nonumber\\
	 \left\{\!\!\begin{array}{rl}
	  \gamma_0\sum_{j\in\mathcal{K}\backslash k}c_jp_j\bar{g}_jp_k^{-1}u_k\!+\!\gamma_0\sigma^2G_{\mathrm{3dB}}^{-1}p_k^{-1}u_k\!\!&\le 1      \\
	   \bar{g}_k^{-1}u_k^{-1}  \!\!&\le 1 
	 \end{array}
	 \right.&,\label{con2}
\end{align}
where the last transformation comes from introducing the auxiliary variables $\{u_k\}$. Then, now it is just a matter of expressing $\bar{g}_k$ in a posynomial form \cite{Boyd2007}, which we address as follows
\begin{align}
    \bar{g}_k&=\frac{1}{\bar{L}_k}=\frac{\kappa^{-\alpha}d_k^{-\alpha}}{P_k^\mathrm{los}\eta_1+P_k^\mathrm{nlos}\eta_2}\nonumber\\
    &\stackrel{(a)}{=}\frac{\kappa^{-\alpha}d_k^{-\alpha}\big(1+\psi e^{\psi\beta} e^{-\beta\theta_k}\big)}{\eta_1+\eta_2\psi e^{\psi\beta} e^{-\beta\theta_k}}\nonumber\\
    &\stackrel{(b)}{\le} (r_k h)^{-\alpha/2}\delta,
    \label{gk}
\end{align}
where $\delta=\frac{2^{-\alpha/2}\kappa^{-\alpha}(1+\psi e^{\psi\beta})}{\eta_1+\eta_2\psi e^{\psi\beta}}$. Notice that $(a)$ comes from using \eqref{Lkm} and \eqref{plos}, while $(b)$ follows after using the inequality between the arithmetic and geometric means: $d_k^{-\alpha}=(r_k^2+h^2)^{-\alpha/2}\le (2r_k h)^{-\alpha/2}$, and taking advantage of $$\frac{1+\psi e^{\psi\beta} e^{-\beta\theta_k}}{\eta_1+\eta_2\psi e^{\psi\beta} e^{-\beta\theta_k}}\le\frac{1+\psi e^{\psi\beta} }{\eta_1+\eta_2\psi e^{\psi\beta}}=2^{\alpha/2}\kappa^{\alpha}\delta.$$
Notice that by using the upper bound of $\bar{g}_k$ provided in \eqref{gk}, we still guarantee that all the constraints of the original problem are satisfied. However, since we constrained further the feasibility set, we may find a non-global optimum solution, which is the cost paid for our simplifications. Now, we can state \textbf{P1-1} as a GP as given next
\begin{subequations}\label{P1-1}
	\begin{alignat}{3}
	\textbf{P1-1:}&\underset{h,\{p_k\},\theta_B,t,\{u_k\}}{\mathrm{minimize}} \ \ 
	t \label{P1-1:a}\\
	&\quad \mathrm{subject~to}\ \,  c_ip_i t^{-1}\le 1,\ \forall i\in \mathcal{K},\label{P1-1:b} \\
	&\qquad\qquad\qquad   p_\mathrm{min}p_i^{-1}\le 1,\ \forall i\in \mathcal{K},\label{P1-1:c} \\
	&\qquad\qquad\qquad   p_i p_\mathrm{max}^{-1}\le 1,\ \forall i\in \mathcal{K},\label{P1-1:d} \\
	&\qquad\qquad\qquad  h_\mathrm{min} h^{-1}\le\ 1,\label{P1-1:e} \\
	&\qquad\qquad\qquad  \theta_0\theta_B^{-1}\le\ 1,\label{P1-1:e2} \\
	&\qquad\qquad\qquad   q_1^{-1}h^{-1}\theta_B^{-q_2}\max_{i\in\mathcal{K}}r_i\le 1,\label{P1-1:f} \\
	&\qquad\qquad\qquad   \omega\big(h,\{p_i\},\theta_B,u_i\big)\!\le\! 1, \forall i\in \mathcal{K},\label{P1-1:g} \\
&\qquad\qquad\qquad  u_i^{-1}r_i^{\alpha/2}h^{\alpha/2}\delta^{-1}\le 1,\  \forall i\in \mathcal{K},\label{P1-1:h}
	\end{alignat}
\end{subequations}
where \eqref{P1-1:g} and \eqref{P1-1:h} come from substituting \eqref{gk} into \eqref{con2}. Then, we have
\begin{align}
    \omega\big(h,\{p_i\},\theta_B,u_i\big)&=\gamma_0h^{-\alpha/2}\delta p_i^{-1}u_i\sum_{j\in\mathcal{K}\backslash i}c_jp_jr_j^{-\alpha/2}\nonumber\\
    &\qquad\qquad\ \ \ \ \  +\gamma_0\sigma^2G_{\mathrm{3dB}}^{-1}p_i^{-1}u_i.\label{w1}
\end{align}
Finally, notice that \textbf{P1-1} is a GP problem with $2K+3$ variables and $5K+3$ inequality constraints.
\subsubsection{\textbf{P1-2}} Now, given the optimization results from \textbf{P1-1} and departing from \textbf{P1} \eqref{P1}, we formulate \textbf{P1-2} in order to find the optimum UAV position and power allocation profile. Notice that similar to \textbf{P1-1}, herein \eqref{f0} and \eqref{con} also hold. Meanwhile, without loss of generality we assume positive coordinates\footnote{It can be straightforwardly performed by properly setting the origin of coordinates.}, e.g., $x_k,y_k\ge 0,\ \forall k\in\mathcal{K}$ such that obviously $x_\mathrm{uav},y_\mathrm{uav}\ge 0$ holds as well. Then, we define $\tilde{x}_k=|x_k-x_\mathrm{uav}|$ and  $\tilde{y}_k=|y_k-y_\mathrm{uav}|$ such that for given $\theta_B$ and $h$ and with the help of \eqref{rk}, constraint \eqref{B} can be re-written as
\begin{align}
&\tan\frac{\theta_B}{2}\ge \frac{r_k}{h}\nonumber\\
&\left(h\tan\frac{\theta_B}{2}\right)^{-1}\sqrt{\tilde{x}_k^2+\tilde{y}_k^2}\le 1\nonumber\\
&\left(h\tan\frac{\theta_B}{2}\right)^{-2}\tilde{x}_k^2+\left(h\tan\frac{\theta_B}{2}\right)^{-2}\tilde{y}_k^2\le 1,
\end{align}
while instead of $\tilde{x}_k=|x_k-x_\mathrm{uav}|$, we use
\begin{align}
    \tilde{x}_k=|x_k-x_\mathrm{uav}|&\rightarrow \left.\Big\{\begin{array}{l}
          \tilde{x}_k\ge x_k-x_\mathrm{uav} \\
          \tilde{x}_k\ge x_\mathrm{uav}-x_k
    \end{array}\right.\nonumber\\
    &\rightarrow \left.\Big\{\begin{array}{l}
          \frac{\tilde{x}_k}{x_k}+\frac{x_\mathrm{uav}}{x_k}\ge 1 \\
          \frac{\tilde{x}_k}{x_\mathrm{uav}}+\frac{x_k}{x_\mathrm{uav}}\ge 1
    \end{array}\right.\nonumber\\
    &\stackrel{\sim}{\rightarrow}\!\footnotemark \left.\Big\{\begin{array}{l}
          2x_k^{-1}\tilde{x}_k^{1/2}x_\mathrm{uav}^{1/2}\ge 1 \\
          2x_\mathrm{uav}^{-1}\tilde{x}_k^{1/2}x_k^{1/2}\ge 1
    \end{array}\right. \nonumber\\
    &\rightarrow \left.\Big\{\begin{array}{l}
          \frac{1}{2}x_k\tilde{x}_k^{-1/2}x_\mathrm{uav}^{-1/2}\le 1 \\
          \frac{1}{2}x_\mathrm{uav}\tilde{x}_k^{-1/2}x_k^{-1/2}\le 1
    \end{array}\right.,
\end{align}
and the same applies for $\tilde{y}_k=|y_k-y_\mathrm{uav}|$. 
\footnotetext{Herein, we use the inequality between the arithmetic and geometric means. Notice that this somewhat reduces the constraint set, but we deal with it later in Section~\ref{optimum}.}
\stepcounter{footnote}

Regarding the constraints related with $\gamma_i\ge \gamma_0$, where $\gamma_i$ is given in \eqref{SK}, notice that \eqref{con2} still holds, while \eqref{gk} can be further transformed to
\begin{align}
    \bar{g}_k\le (2\tilde{x}_k\tilde{y}_k)^{-\alpha/4}h^{-\alpha/2}\delta,
\end{align}
by using $r_k^{-\alpha/2}=(\tilde{x}_k^2+\tilde{y}_k^2)^{-\alpha/4}\le (2\tilde{x}_k\tilde{y}_k)^{-\alpha/4}$. Then, we can state \textbf{P1-2} as a GP as given next
\begin{subequations}\label{P1-2}
	\begin{alignat}{3}
	\textbf{P1-2:} \notag  &\underset{\{x_\mathrm{uav},y_\mathrm{uav},\tilde{x}_k,\tilde{y}_k,p_k,u_k\},t}{\mathrm{minimize}}\ \  t  \\
	&\qquad \mathrm{subject~to}\, \qquad  c_ip_i t^{-1}\le 1,\ \label{P1-2:b} \\
	&\qquad\qquad\qquad\ \qquad p_\mathrm{min}p_i^{-1}\le 1,\ \label{P1-2:c} \\
	&\qquad\qquad\qquad\ \qquad  p_i p_\mathrm{max}^{-1}\le 1,\ \label{P1-2:d} \\
	&\qquad \ \qquad \!\left(\!h\!\tan\!\frac{\theta_B}{2}\!\right)^{\!-\!2}\!\!\tilde{x}_i^2\!+\!\!\left(\!h\!\tan\!\frac{\theta_B}{2}\!\right)^{\!-\!2}\!\!\tilde{y}_i^2\!\le\! 1, \label{P1-2:e} \\
	&\qquad\qquad\qquad\ \qquad   \frac{1}{2}x_k\tilde{x}_k^{-1/2}x_{\mathrm{uav}}^{-1/2}\le 1, \label{P1-2:f} \\
	&\qquad\qquad\qquad\ \qquad   \frac{1}{2}x_\mathrm{uav}\tilde{x}_k^{-1/2}x_k^{-1/2}\le 1,\ \label{P1-2:g} \\
	&\qquad\qquad\qquad\ \qquad   \frac{1}{2}y_k\tilde{y}_k^{-1/2}y_{\mathrm{uav}}^{-1/2}\le 1,\ \label{P1-2:h} \\
	& \qquad\qquad\qquad\ \qquad  \frac{1}{2}y_\mathrm{uav}\tilde{y}_k^{-1/2}y_k^{-1/2}\le 1,\ \label{P1-2:i} \\
	& \qquad\qquad\qquad\ \qquad  \tilde{\omega}(\tilde{x}_i,\tilde{y}_i,p_i,u_i)\le 1, \label{P1-2:j} \\
	& \qquad\qquad\qquad\ \qquad  (2\tilde{x}_i\tilde{y}_i)^{\alpha/4}u_i^{-1}h^{\alpha/2}\delta^{-1}\le 1, \label{P1-2:k}
	\end{alignat}
\end{subequations}
where the constraints are $\forall i\in \mathcal{K}$ and  $\tilde{\omega}$ is given by $\omega$ in \eqref{w1}, but for fixed $h$, $\theta_B$ and by substituting $r_i^{-\alpha/2}$ by $(2\tilde{x}_i\tilde{y}_i)^{-\alpha/4}$. Finally, notice that \textbf{P1-2} is a GP problem with $6K+1$ variables and $10K$ inequality constraints.
\section{Optimization algorithm}\label{optimum}

Although GP problems are not in general convex, they can be transformed straightforwardly to convex problems \cite{Boyd2007}. 
For \textbf{P1-1} and \textbf{P1-2}, it is just a matter of changing each variable $``\mathrm{var}"$ by $\ln (``\mathrm{var}")$ and taking the logarithm of the constraint functions (i.e., the posynomials are transformed into log-sum-exp functions, which are convex \cite{Boyd2007}). After such transformation, each sub-problem can be solved by any convex optimization algorithm,  by taking advantage of the KKT conditions.

 Algorithm~\ref{alg1} details the steps for solving the general optimization problem \textbf{P1} through the two proposed subproblems. Specifically, lines \ref{lin1}-\ref{lin3} deal with initialization, while lines \ref{lin4}-\ref{lin8} deal with the process of solving \textbf{P1-1} and \textbf{P1-2} consecutively until the objective function at each iteration decreases at most by $\xi$, which is a convergence parameter given as input to the algorithm. Notice that after solving \textbf{P1-1}, which uses \eqref{B2} as an approximation for \eqref{B}, we project  $\theta_B$ back to the edge of the original constraint as captured in line \ref{lin6p5}.
After the approximate GP subproblems have been solved, we can still refine (at least some) of the optimization variables to reduce further the objective function. Notice that this may be possible since the optimization problem determined by the two GP subproblems operates over a feasible set that is a subset of the original feasible set given by \eqref{P1:b}-\eqref{P1:c}. We perform this in line \ref{line10} of the optimization algorithm by solving \textbf{P1-3}, which is nothing but \textbf{P1} with a fixed input $\{h, \theta_B, x_{\mathrm{uav}}, y_{\mathrm{uav}}\}$ and optimization variable set $\{p_k\}$ as given next
\begin{subequations}\label{P1-3}
	\begin{alignat}{2}
    	\textbf{P1-3:}\ &\underset{\{p_k\}, t}{\mathrm{minimize}} \quad t \label{P1-3:a}\\
        &\mathrm{subject~to} \ \ c_i p_i \leq t, \label{P1-3:b} \\
        &\qquad \qquad \qquad p_i \le p_\mathrm{max}, \label{P1-3:c} \\
    	&\qquad \qquad \quad - p_i\le p_\mathrm{min}, \label{P1-3:d} \\
    	\gamma_0 &\sum_{j\in\mathcal{K}\backslash i} c_j G_j \bar{g}_j p_j - G_i \bar{g}_i p_i + \sigma^2 \leq 0, \label{P1-3:e}
	\end{alignat}
\end{subequations}
where the constraints are $\forall i \in \mathcal{K}$ and \eqref{P1-3:e} comes form writing \eqref{P1:b} as a linear equation in $p_i$ and $p_j$. Note that \textbf{P1-3} is written in linear programming (LP) form, and since no approximation was used, we can claim global optimality for this particular convex sub-problem. 

In practice, Algorithm \ref{alg1} iteratively optimizes the UAV's (position and antenna beamwidth), and the IoT devices' (transmit power) parameters to minimize the  energy consumption of the latter. The optimization relies on the known average channel statistics, which depend on the average propagation parameters for different environments, and also on the estimated activation probabilities and relative positions of the IoT devices. Therefore, it can be carried out at the UAV, which periodically reports the updated decisions to the IoT devices. Hence, our proposed solution can be implemented in real-time as long as the system dynamics hold within the convergence time of Algorithm 1. Finally, notice that all simplifications that lead to the Algorithm \ref{alg1} aim at finding a good local optimum of \textbf{P1}, for which a convex equivalent form does not exist to the best of the authors knowledge, and hence, its global optimum cannot be guaranteed by any solver.

\begin{algorithm}[t!]
 \caption{Optimum UAV position and IoT nodes' transmit power}
 \begin{algorithmic}[1] \label{alg1}
 \STATE \textbf{Input:} $\{x_k,y_k,c_k\}_{\forall k\in\mathcal{K}}, \gamma_0,p_\mathrm{min},p_\mathrm{max},h_\mathrm{min},\xi$ \label{lin1}
 \STATE $\mathrm{it}=0$ (iteration index) \label{lin2}
\STATE Set $t^{(0)}=\infty$ and $x_\mathrm{uav}^{(0)}$, $y_\mathrm{uav}^{(0)}$ according to \eqref{uav0} \label{lin3}
\REPEAT
\label{lin4}
\STATE $\mathrm{it}\leftarrow \mathrm{it}+1$ \label{lin5}
\STATE Solve \textbf{P1-1} given $x_\mathrm{uav}^{(\mathrm{it}-1)}$, $y_\mathrm{uav}^{(\mathrm{it}-1)}$, \textbf{output:} $h^{(\mathrm{it})},\{p_k^{(\mathrm{it})}\},\theta_B^{(\mathrm{it})}$ \label{lin6}
\STATE Update: $\theta_B^{(\mathrm{it})}\leftarrow \max\Big(
2\tan^{-1}\frac{\max_k r_k}{h^{(\mathrm{it})}},\theta_0\Big)$\label{lin6p5}
\STATE Solve \textbf{P1-2} given $h^{(\mathrm{it})},\ \theta_B^{(\mathrm{it})}$, \qquad \textbf{output:}\qquad  $x_\mathrm{uav}^{(\mathrm{it})}$, $y_\mathrm{uav}^{(\mathrm{it})}$, $\{p_k^{(\mathrm{it})}\}$, $t^{(\mathrm{it})}$ \label{lin7}
\UNTIL{$t^{(\mathrm{it}-1)}-t^{(\mathrm{it})}\le \xi$} \label{lin8}
\STATE Solve \textbf{P1-3} given $h^{(\mathrm{it})},\ \theta_B^{(\mathrm{it})},\ x_\mathrm{uav}^{(\mathrm{it})},\ y_\mathrm{uav}^{(\mathrm{it})}$, \textbf{output:}\quad $\{p_k^*\}$ \label{line10}
\STATE \textbf{Output:} $\{p_k^*\}$, $x_\mathrm{uav}^*=x_\mathrm{uav}^{(\mathrm{it})},\ y_\mathrm{uav}^*=y_\mathrm{uav}^{(\mathrm{it})}$,  $\theta_B^*=\theta_B^{(\mathrm{it})}$, $t=\max\{c_k p_k^*\}$ \label{lin14}
\end{algorithmic}
\end{algorithm}

Using barrier-based IPM, each GP sub-problem in Algorithm \ref{alg1} can be efficiently solved with accuracy error $\epsilon$ in a worst-case polynomial time complexity \cite{chiang2005geometric}. The number of per-GP required iterations is in the order of
\begin{align}
    \mathcal{C}_1=\mathcal{O}\left(\sqrt{n+m}\ln{\frac{(n+m)\Delta}{\epsilon}}\right),
\end{align}
while each iteration demands
\begin{align}
\mathcal{C}_2=\mathcal{O}\left((m+s)(s+n)\sqrt{m+n}\ln{\frac{(m+n)\Delta}{\epsilon}}\right)    
\end{align}
arithmetical operations, where $m$, $n$, $s$ denote the number of constraints, monomial terms, and variables respectively \cite{nesterov1994interior}. The term $\Delta$ is related to a perturbation in the feasible set when solving the problem. Without loss of generality, we may assume $\Delta = 1$ since the complexity scales linearly with $\ln(\Delta)$. Finally, \textbf{P1-3} requires $\mathcal{O}\big(\sqrt{m}\log{(1/\epsilon)}\big)$ iterations each computed in $\mathcal{O}\big(s^2m\big)$ arithmetic operations when using the same IPM \cite{ye2011interior}. Table~\ref{tab:compCost} presents the values of $m$, $n$, and $s$ for each individual GP problem. Fig.~\ref{compCost} shows that as $K$ increases, the number of iterations for solving the GPs grows linearly, whereas the number of operations grows exponentially. The readers can observe that \textbf{P1-2} is the most costly sub-problem in Algorithm~\ref{alg1}, so it roughly determines the final computational time.
\begin{table}[t!]
    \centering
    \caption{Values of $m$, $n$, and $s$ for each optimization sub-problem.}
    \label{tab:compCost}
    \begin{tabular}{|c|c|c|c|}
         \hline
            \textbf{Parameter} & \textbf{P1-1} &  \textbf{P1-2} & \textbf{P1-3} \\
        \hline
            $m$ & $5K+3$ &  $10K$ & $4K$ \\
        \hline
            $n$ & $K^2 + 4K + 3$ & $K^2+9K$ & $-$ \\
        \hline
            $s$ & $2K+3$ & $6K+1$ & $K+1$ \\
        \hline
    \end{tabular}
    
\end{table} 
\begin{figure}[t!]
    \centering
    \includegraphics[width=\columnwidth]{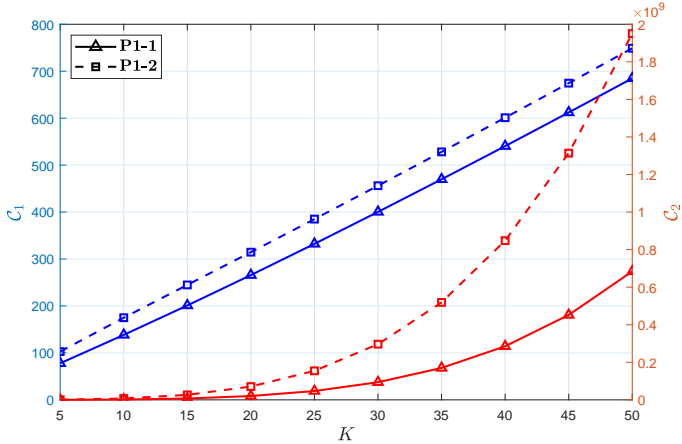}
    \caption{Complexity analysis vs $K$ of both \textbf{P1-1} and \textbf{P1-2}, for $\epsilon = 10^{-2}$.}
    \label{compCost}
\end{figure}

\section{Numerical Analysis}\label{results}
 In this section, we elucidate numerical results regarding the solution of \textbf{P1} throughout three methods. The main approach applies Algorithm~\ref{alg1}, where each optimization sub-problem is solved with the help 
 of the specialized MOSEK solver \cite{aps2019mosek}. As a benchmark, we solve directly \textbf{P1} using IPMs based on logarithmic barrier function and GA. The former solves a sequence of approximate minimization problems by either solving the KKT system or using the conjugate gradient method \cite{gondzio2012interior}, whereas the latter relies on stochastic derivative-free techniques that mimic the evolution of living species\cite{mehboob2016genetic}. 
 
As Fig.~\ref{deployments} depicts, we consider both: i) a cellular-like deterministic deployment, and ii) random deployments. In both cases the devices are uniformly distributed in the area, although deterministically/randomly in case of deployment i)/ii). Additionally, we assume that the activation probabilities of devices are uniformly distributed such that $c_k \sim \mathcal{U}(0, 0.5), \forall k \in \mathcal{K}$, regardless of the deployment strategy. In order to study our problem for different propagation conditions, we consider different urban scenarios namely suburban, urban, dense urban, and highrise urban, for which the set of parameters $\{\psi, \beta, \eta_1, \eta_2 \}$ is given in Table~\ref{tab:param}. Finally, unless we state the contrary, the numerical simulations are based on the parameters listed in Table~\ref{tab:simparam}. Note that the maximum transmit power meets the typical values of low-power transceivers, e.g., \cite{nrf9160}, whereas the value of $\gamma_0$ corresponds to low-rate transmissions as typical in IoT devices. Besides, the carrier frequency matches with the numerical setup in \cite{7510820} and the hovering height limits lie within the allowed range for a LAP \cite{9080597}.

\begin{table}[t!]
    \centering
    \caption{Propagation parameters for different environments \cite{ICC2016}.}
    \label{tab:param}
    \begin{tabular}{|c|c|}
         \hline
            \textbf{Environment} & \textbf{Parameters} $\{\psi,\ \beta,\ \eta_1[dB],\ \eta_2[dB] \}$ \\
        \hline
            Suburban & $\{4.88,\ 0.43,\ 0.1,\ 21 \}$ \\
        \hline
            Urban & $\{9.61,\ 0.16,\ 1,\ 20 \}$ \\
        \hline
            Dense Urban & $\{12.08,\ 0.11,\ 1.6,\ 23 \}$ \\
        \hline
            Highrise & $\{27.23,\ 0.08,\ 2.3,\ 34 \}$ \\
        \hline
    \end{tabular}
    
\end{table} 
\begin{figure}[t!]
    \centering
	\subfigure{\includegraphics[width=0.455\linewidth]{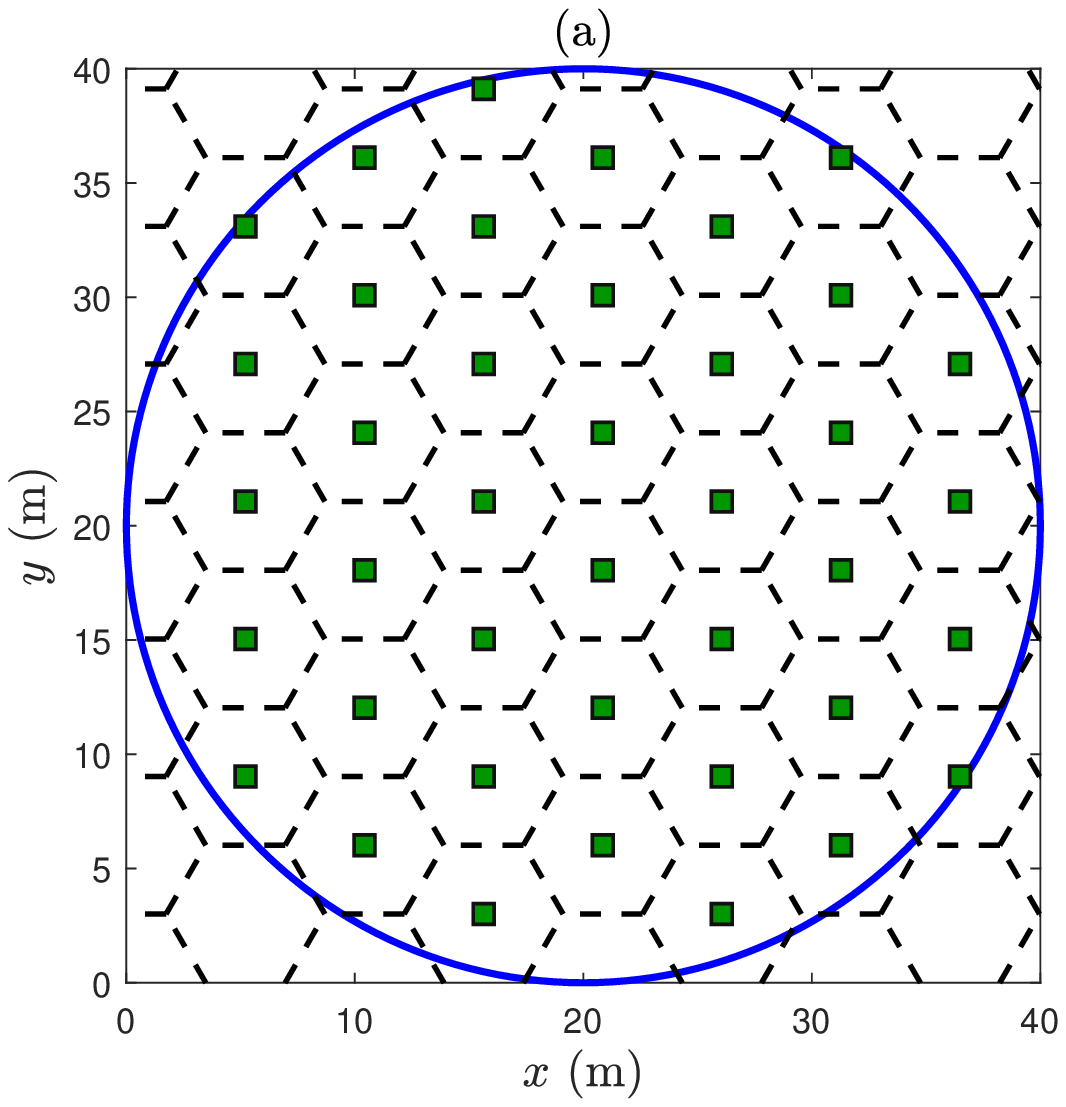}} 
	\subfigure{\includegraphics[width=0.455\linewidth]{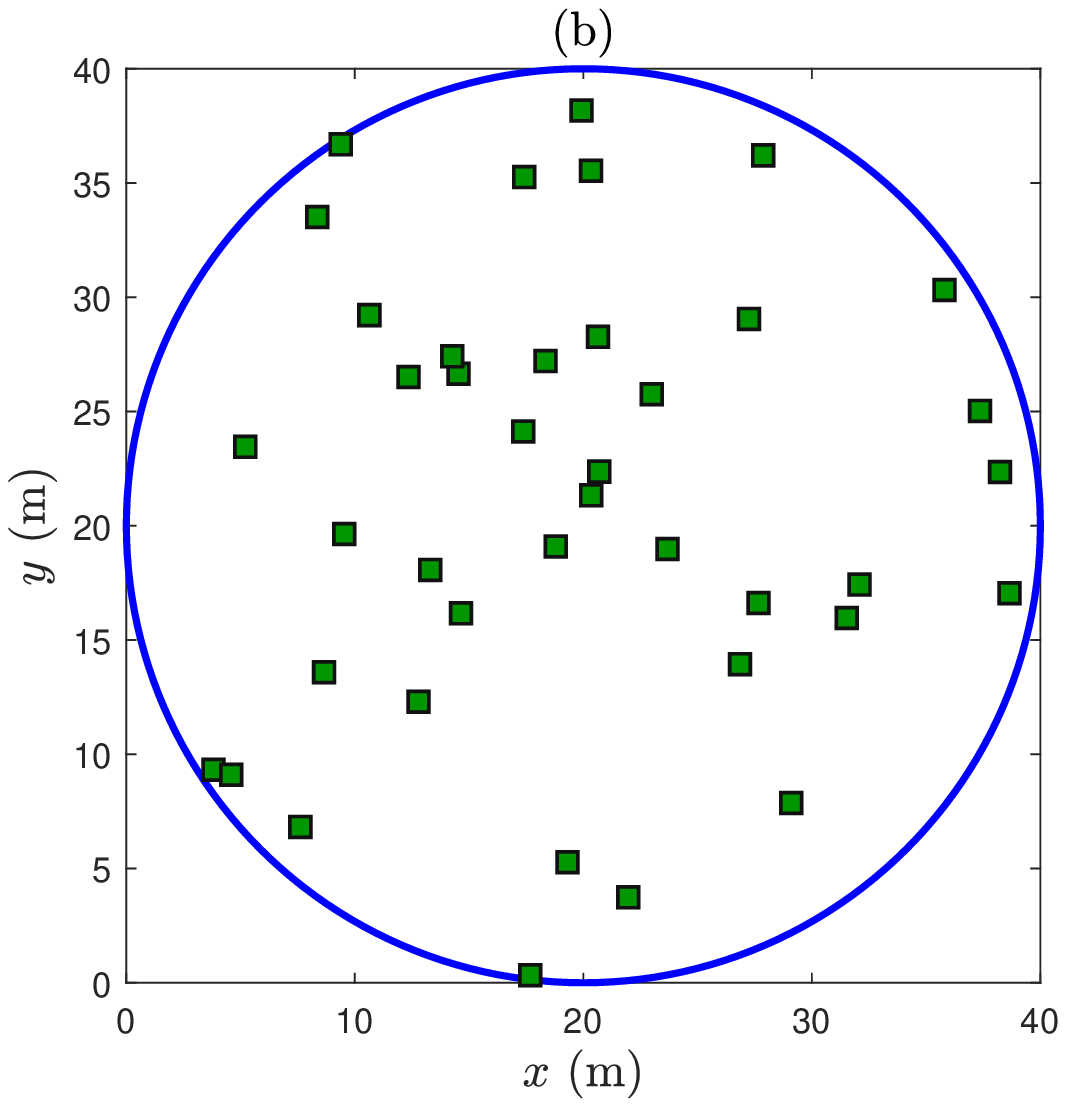}}
    \caption{Analyzed scenarios: (a) deterministic and (b) a random deployment. The green squares represent the IoT devices, whereas the coverage area is delimited with the blue circumference. We set $R = 20$~m and $K = 40$. }\label{deployments}
\end{figure}

\begin{table}[t!]
    \centering
    \caption{Default simulation parameters.}
    \label{tab:simparam}
    {\begin{tabular}{c c | c c}
         \thickhline
            \textbf{Parameter} & \textbf{Value} & \textbf{Parameter} & \textbf{Value} \\
        \thickhline
            $R$ & $20$~m & $p_\mathrm{min}$ & $1$~mW \\
            $\gamma_0$ & -$16$~dB & $p_\mathrm{max}$ & $500$~mW \\
            $\theta_0$ & $\frac{\pi}{18}$ & $h_\mathrm{min}$ & $40$~m \\
            $f_c$ & $2.5$~GHz & $h_\mathrm{max}$ & $1000$~m \\
        \thickhline
    \end{tabular}}
    
\end{table} 

\subsection{On the Impact of the Target $\bar{\text{S}}\overline{\text{IN}}\text{R}$}
Fig.~\ref{fig_34}(a) illustrates the objective function as a function of the target $\bar{\text{S}}\overline{\text{IN}}\text{R}$ $\gamma_0$ for a scenario where $25$ and 50 IoT devices are deployed. Notice that the maximum $\gamma_0$ is subject to Proposition~1's result \eqref{gamma2_init}, and it strongly depends on the number of IoT devices. Although the problem was solved using the three previously mentioned optimization tools;  Fig. \ref{fig_34}(a) shows only the solution provided by Algorithm~\ref{alg1} since IPMs and GA did not often return feasible solutions when varying $\gamma_0$. We can note the benefits of a deterministic homogeneous deployment enabled via proper network planning in terms of reducing the average energy consumptionat the most energy-demanding IoT device. However, the gain with respect to random deployments becomes small as the number of IoT devices decreases, thus it is not worth restricting their positions; instead, random deployments can provide a quite similar performance with a slight increase of the worst-case average energy consumption. Meanwhile, Fig. \ref{fig_34}(b) shows the optimal hovering heights of the UAV when 25 devices are randomly and deterministically deployed. As we can notice, lower altitudes are available for the UAV in deterministic deployments when using Algorithm~\ref{alg1} and IPMs. Height results under IPMs exhibit a decreasing behaviour with $\gamma_0$, while GA does not show a monotonic behaviour with the target $\bar{\text{S}}\overline{\text{IN}}\text{R}$ and Algorithm~\ref{alg1} tends to fix the height. The latter may benefit the UAV by saving valuable energy from propelling power.
\begin{figure}[t!]
    \centering
    \includegraphics[width=\columnwidth]{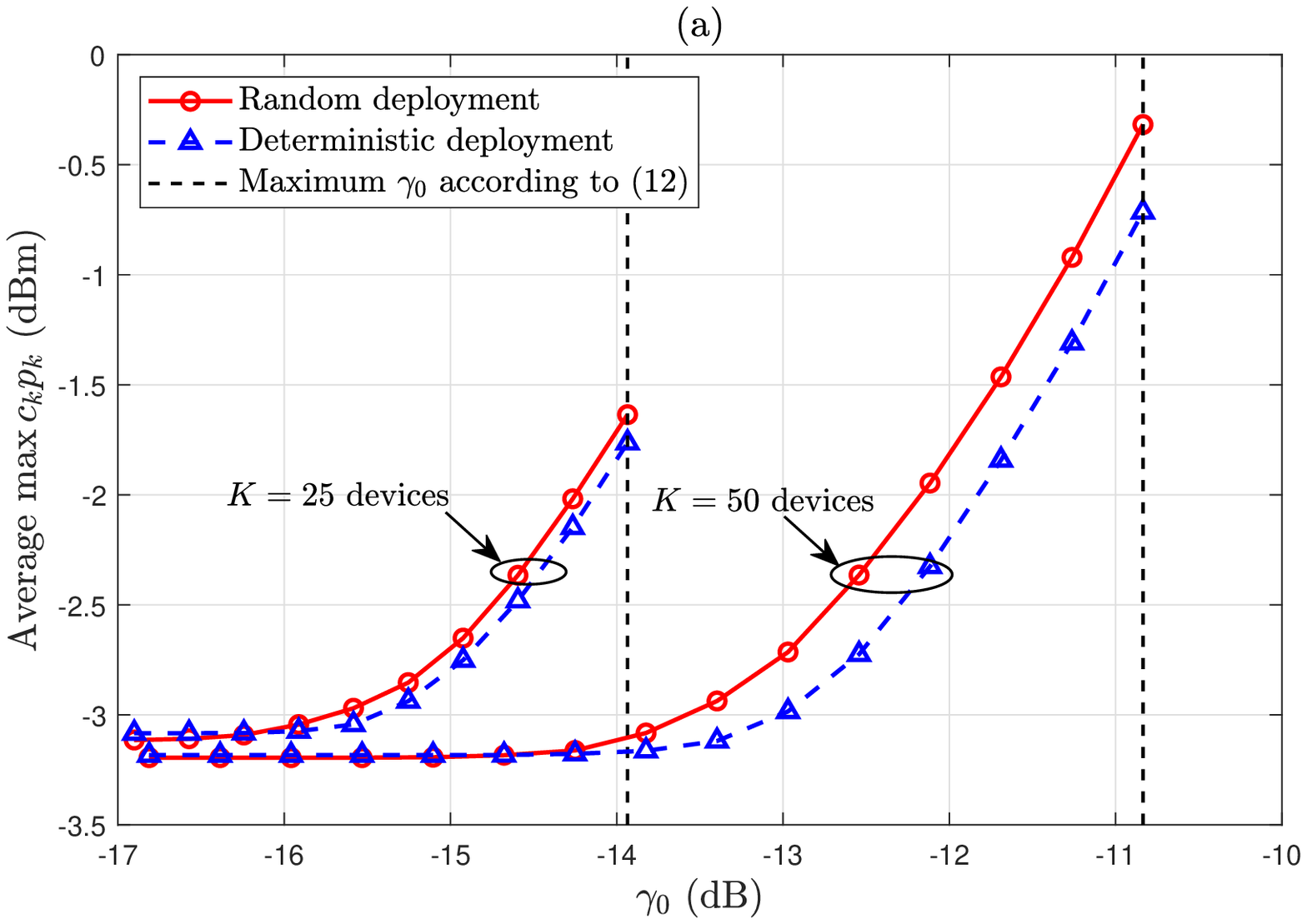}\\
    
    \vspace{0.5em}
    \includegraphics[width=\columnwidth]{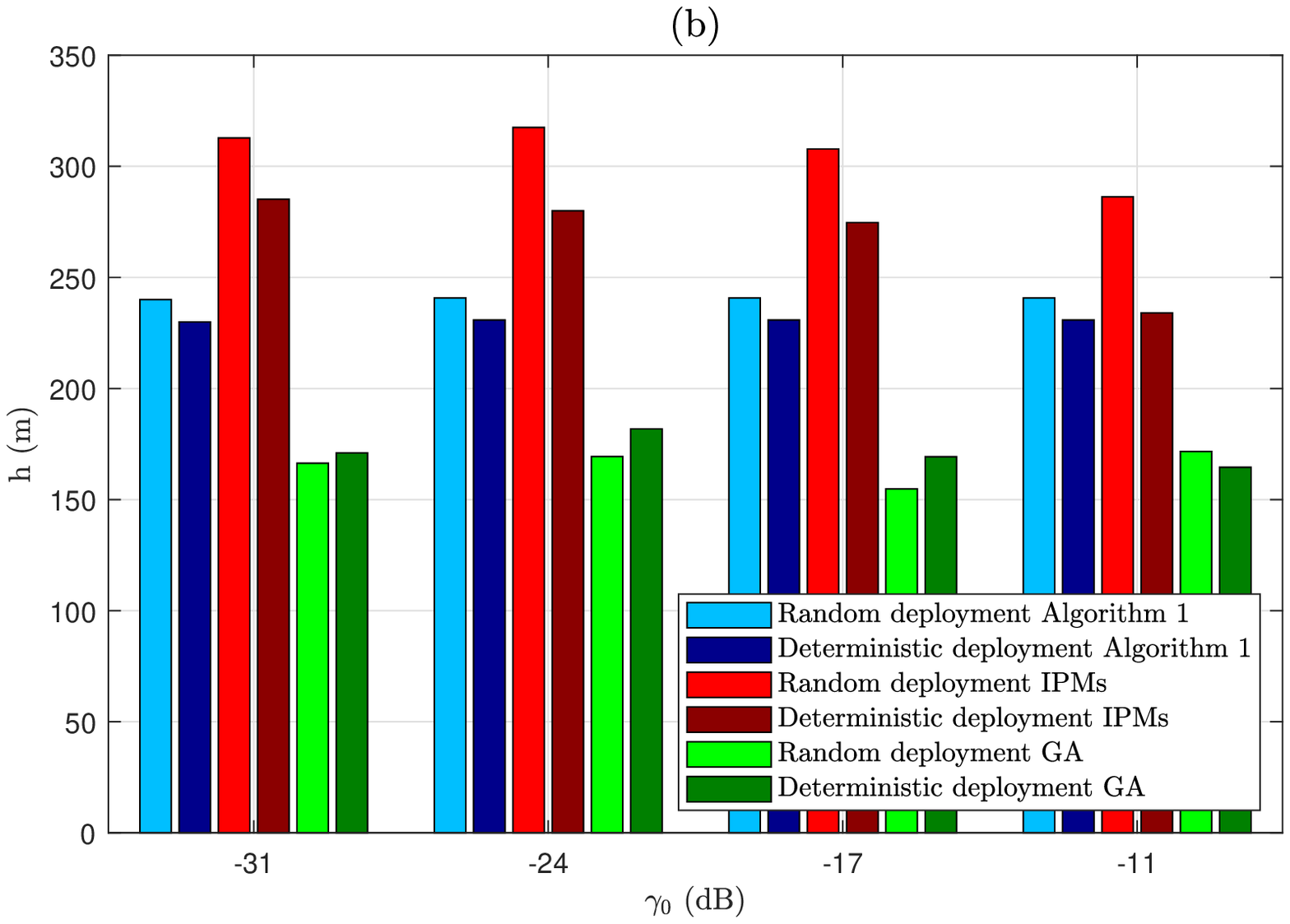}
    \caption{(a) Average $\max_k c_k p_k$ for $K\in\{25,50\}$ and (b) UAV hovering height  for $K=25$, vs $\gamma_0$, in dense urban environment using Algorithm~\ref{alg1}.}
    \label{fig_34}
\end{figure}
\subsection{On the Impact of the Number of IoT Devices}
The impact of the number of devices on the worst-case average energy consumption is depicted in Fig.~\ref{objvsK}. Clearly, Algorithm~\ref{alg1} solution outperforms the other methods in terms of the objective function value, whereas IPMs exhibit some irregularities mainly caused by the non-derivable objective and the highly nonlinear constraint \eqref{P1:b}. The worst-case average energy consumption has an increasing behaviour as a function of the number of IoT devices $K$. This is because more energy is needed to overcome the interference which results from the higher number of devices when the network becomes more dense. Herein, the difference of the solution for the deterministic and the random deployments is negligible, showing that there is no additional gain in carefully planning the network. From the results in Fig.~\ref{objvsK} we can estimate the battery life of the worst IoT device for each algorithm, which is a common metric when evaluating the network performance. Assuming a low dense scenario with $K = 10$ IoT devices equipped with a fully charged battery (for instance, CR MULTICOMP 2032, 3V, 210 mAh), we have that on average the battery life of the fist device running out of energy is 8 hours for IPMs, 1 day and 11 hours for GA and 65 days and 22 hours for Algorithm 1. The reader can notice that our strategy extends the devices' battery lifetime by minimizing the peaks in their energy consumption profile.
\begin{figure}[t!]
    \centering
    \includegraphics[width=\columnwidth]{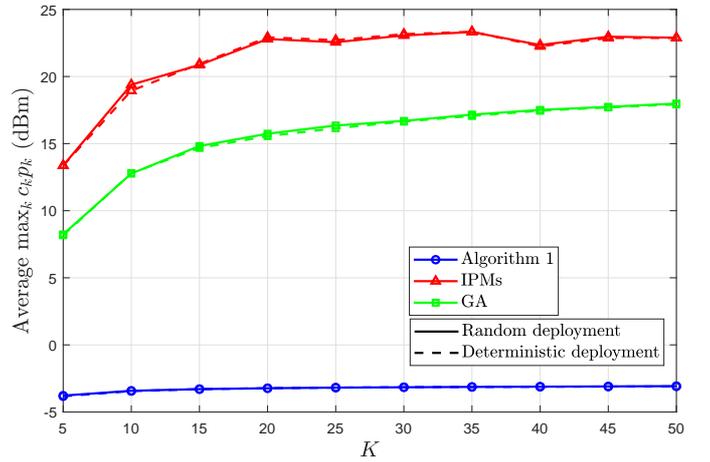}
    \caption{Average $\max_k c_k p_k$ for both deterministic and random deployment as a function of the number of IoT nodes $K$ in a dense urban environment.}
    \label{objvsK}
\end{figure}
\begin{figure}[t!]
    \centering
    \includegraphics[width=\columnwidth]{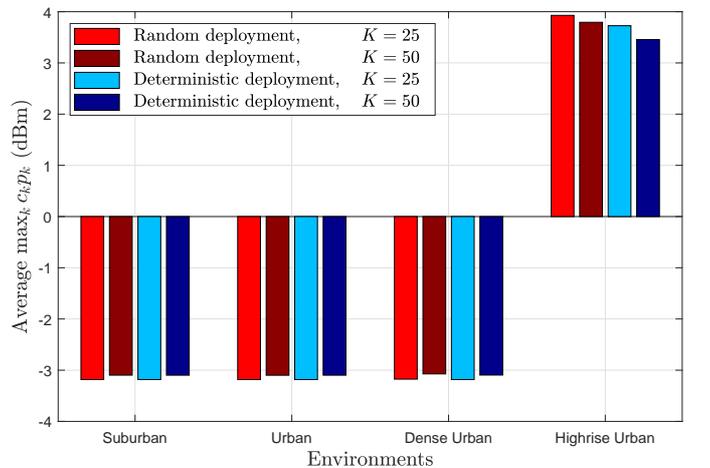}
    \caption{Average $\max_k c_k p_k$ evaluation through Algorithm~\ref{alg1} for different urban environments and $K \in \{25, 50\}$. The optimum height is approximately $236$~m for all scenarios.}
    \label{objvsscen}
\end{figure}
\begin{figure}[t!]
    \centering
    \includegraphics[width=\columnwidth]{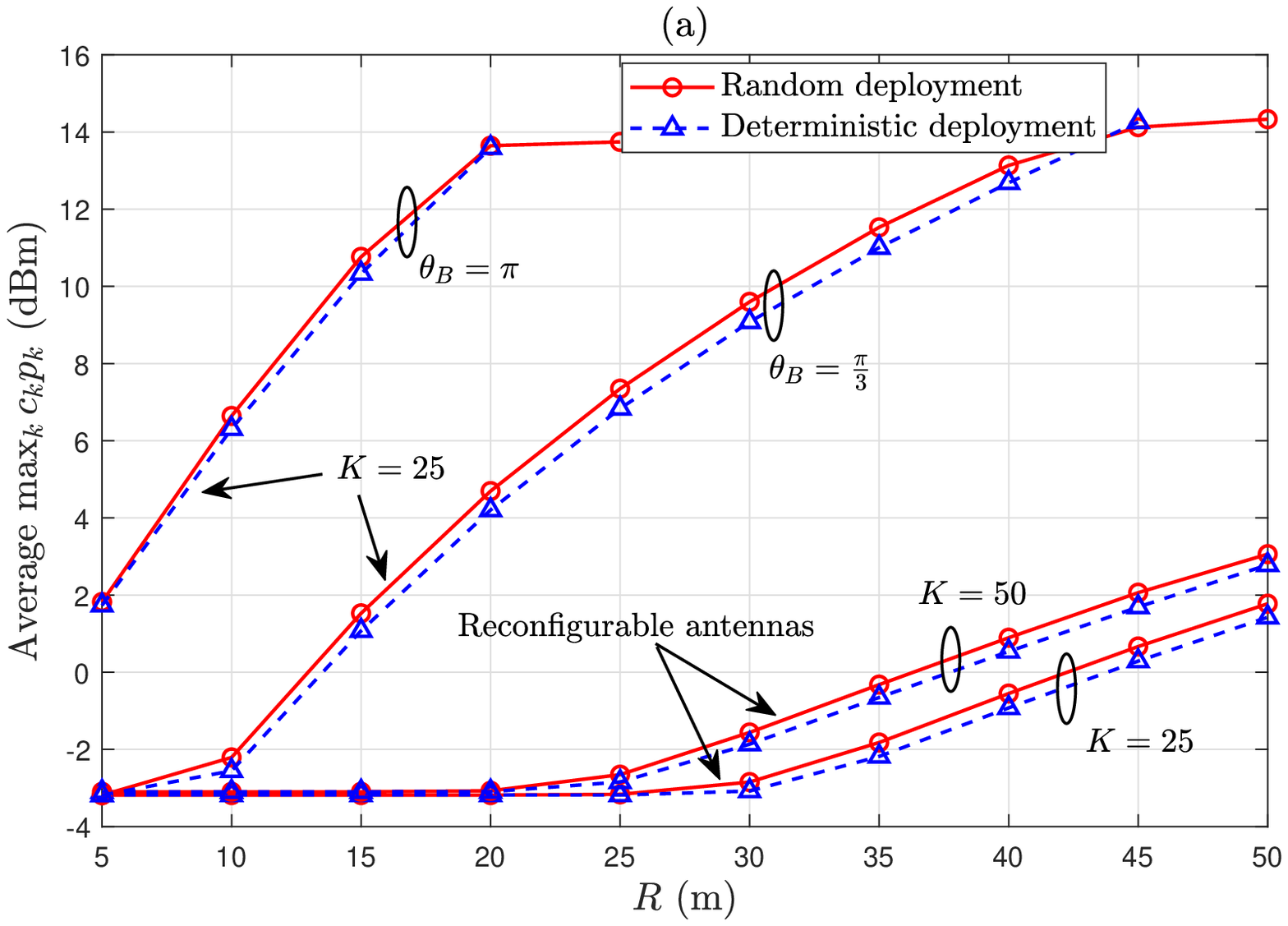}\\
    
    \vspace{0.5em}
    \includegraphics[width=\columnwidth]{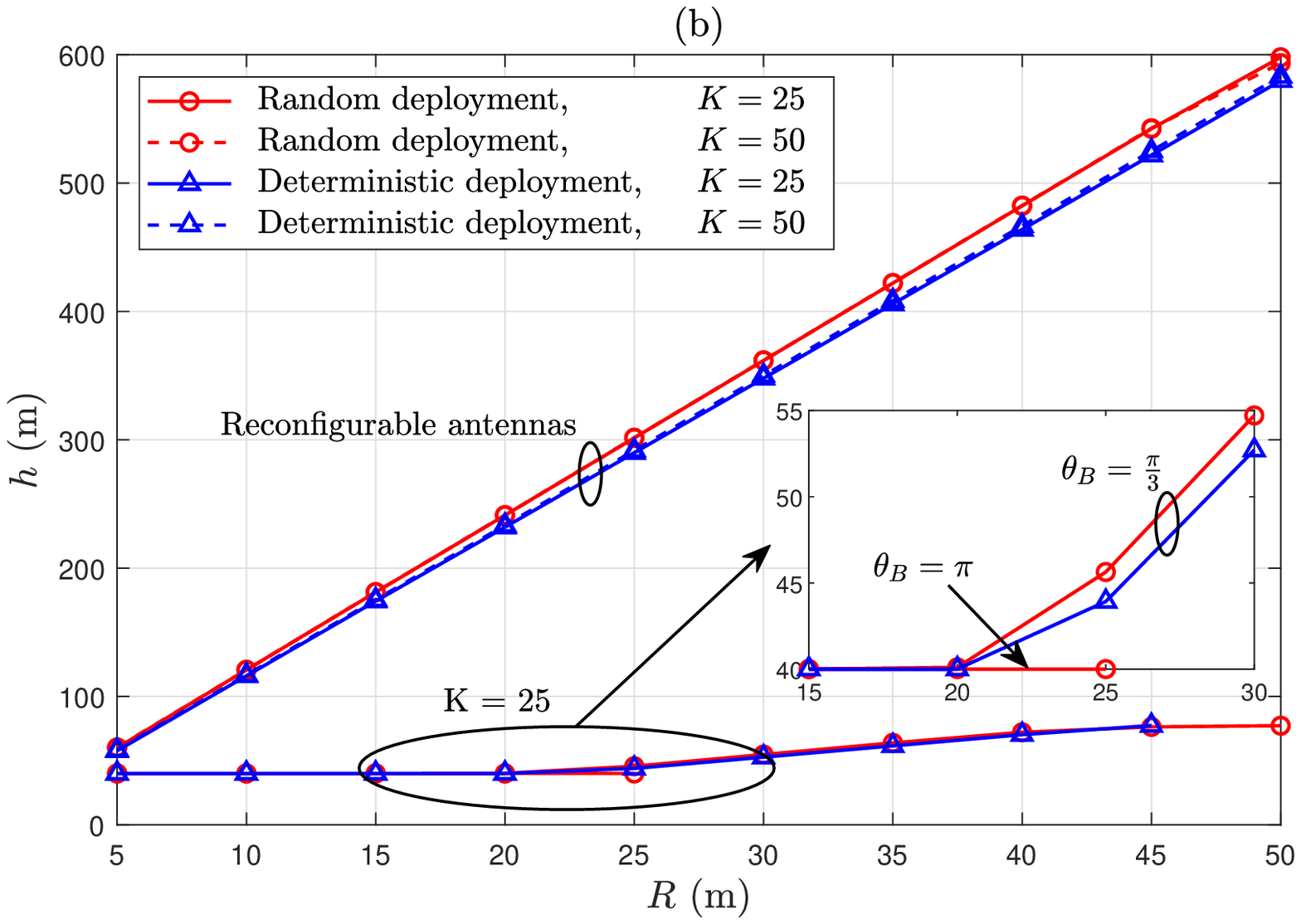}
     \caption{(a) Average $\max_k c_k p_k$, and (b) Optimal UAV hovering height vs $R$. We consider a dense urban environment, and $K \in \{25, 50\}$. The attained optimum beamwidth remains constant over $R$: $\theta_B^* = \theta_0 = \frac{\pi}{18}$.}
    \label{objvsRoheight}
\end{figure}

\subsection{On the Impact of the Environment and Coverage Area}
In Fig.~\ref{objvsscen}, we show the devices' worst-case average energy consumption for different urban environments. In the first three environments, the gap among the plots is just appreciated for different number of devices, and not because of the propagation conditions. However, the highrise environment pushes the energy consumption to larger values to overcome the blockage caused by surrounding objects. In all the cases, the UAV's altitude, which is not shown in the figure, remains nearly constant, and the main distinction is on the transmit power allocation.

On the other hand, the impact of the coverage region dimensions is presented in Fig.~\ref{objvsRoheight}(a) assuming a dense urban environment.
Note that for small coverage  areas, the performance remains steady but as the radius increases from certain point ($R>20$~m and $R>25$~m for $K=50$ and $K=25$, respectively), all curves start to steadily increase. This behaviour is supported by the fact that the UAV goes higher in order to serve larger areas, as shown in Fig.~\ref{objvsRoheight}(b), which demands greater energy resources from the IoT devices. We have shown the results delivered by Algorithm~\ref{alg1}, since IPMs and GA do not converge smoothly to the final solution. Note that for $R \ge 30$~m, the UAV requires to fly $15$~m higher on average, to serve the IoT devices in the random deployment over the case when the network has been planned, which costs a slight increase in the UAV's propelling power. As a benchmark, we also present the results of using antennas with fixed beamwidth at the UAV. The reader can notice the performance degradation in terms of worst-case average energy consumption as $\theta_B$ increases (Fig.~\ref{objvsRoheight}(a)), despite the fact that the hovering height has decreased (Fig.~\ref{objvsRoheight}(b)) with respect to the scenario with reconfigurable antennas. This is because the antenna gain has significantly deteriorated and the IoT devices must transmit with higher power. In particular, the optimization problem is no longer feasible for $R > 25$~m, when $\theta_B = \pi$. We can also notice that the optimal UAV hovering heights in both Fig.~\ref{fig_34}(b) and Fig.~\ref{objvsRoheight}(b) fall within the allowed range for the operation of LAPs \cite{al2014modeling}.

Fig.~\ref{fig:convergence} shows the average computation time (using parallel quad-core processing) for all methods as a function of the number of devices. As observed, the average computation time grows exponentially as a function of the number of devices $K$ when using IPMs and GA methods, while the increase is closely linear when using Algorithm~\ref{alg1}, which is expected according to our discussions around Fig.~\ref{compCost}. Notice that Algorithm~\ref{alg1} is the least time-consuming method for every $K$---it converges in around three iterations---, while it is worth mentioning that IPMs have the highest computation time because of the extreme non-linearity of \eqref{P1:b}, which does not impact significantly the other approaches. Finally, computing the solution under random deployment costs slightly less time when compared to deterministic deployment.
\begin{figure}[t!]
   \centering
    \includegraphics[width=\columnwidth]{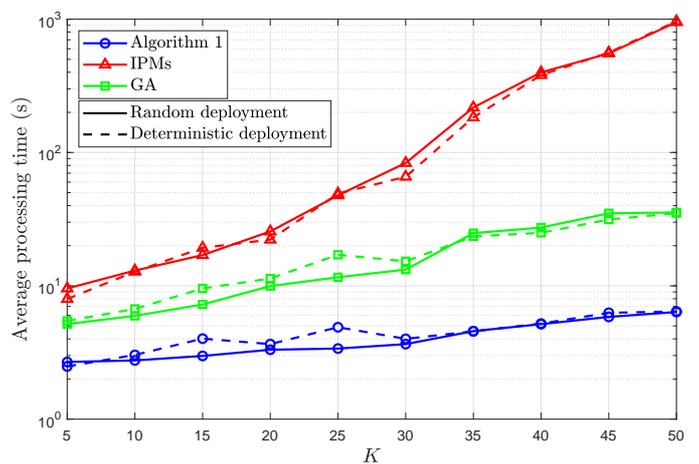}
    \vspace{-5mm}
    \caption{Average computation time vs $K$ in a dense urban environment.}
   \label{fig:convergence}
\end{figure}
\section{Conclusions}\label{conclusions}
In this work, we proposed a GP-based algorithm for minimizing the worst-case average energy consumption of IoT devices served by a hovering UAV. The proposed algorithm not only controls the uplink transmit power of the devices but also the UAV's position and antenna configuration. We quantified the maximum attainable QoS in terms of per-link $\bar{\text{S}}\overline{\text{IN}}\text{R}$ as a function of the number of IoT devices and their activation probabilities, and showed how it diminishes in crowded deployments due to excessive interference. Our proposed optimization algorithm stands out when compared to two different benchmark schemes relying on IPMs and GA solvers, with clear reduction in the computer processing time. Additionally, we showed the marginal benefits of planning the network compared to the random deployment in terms of reducing the worst-case average energy consumption, especially in dense deployments. Among the analyzed urban environments, the highrise environment was shown to be the most demanding in terms of energy consumption. Finally, the worst-case average energy requirements were shown to increase as the UAV flies higher to serve wider areas. 

As an interesting future work, we could also consider energy consumption and maximum flying speed at the UAV, which are limiting factors of UAV-assisted networks \cite{8918497}. Besides, we can include specific regulations on the maximum hovering altitude in the worst-case average energy minimization problem. Finally, the trajectory of a UAVs' swarm can be optimized for serving the IoT network as an extension for this work instead of utilizing a single UAV.

\bibliographystyle{IEEEtran}
\bibliography{IEEEabrv,references}
\end{document}